\documentclass[
  aps,
  pra,
  reprint,
  superscriptaddress,
  amsmath,
  amssymb,
  nofootinbib,
  longbibliography
]{revtex4-2}

\usepackage[T1]{fontenc}
\usepackage[american]{babel}
\usepackage{mathtools}
\usepackage{amsthm}
\usepackage{graphicx}
\usepackage{xcolor}
\usepackage{booktabs}
\usepackage{bm}
\usepackage{braket}
\usepackage{microtype}
\usepackage{hyperref}
\usepackage[nameinlink,capitalise,noabbrev]{cleveref}
\usepackage{orcidlink}
\usepackage{cleveref}
\definecolor{2e308d}{RGB}{46,48,141}

\RequirePackage{tikz, pgfplots}
\usetikzlibrary{arrows.meta}
\usetikzlibrary{backgrounds}
\usepgfplotslibrary{groupplots}
\definecolor{matlabBlue}{rgb}{0,0.4470,0.7410}
\definecolor{matlabOrange}{rgb}{0.8500,0.3250,0.0980}

\newcommand{\qinc}{Quantum Innovation Centre (Q.InC), \href{https://ror.org/036wvzt09}{Agency for Science, Technology and Research (A*STAR)}, 4 Fusionopolis Way, Kinesis \#05-01, Singapore 138635, Republic of Singapore\looseness=-1}

\newcommand{\smt}{Science, Mathematics and Technology Cluster, \href{https://ror.org/05j6fvn87}{Singapore University of Technology and Design}, 8 Somapah Road, Singapore 487372, Republic of Singapore\looseness=-1}

\newcommand{\esd}{Engineering Systems Design Pillar, \href{https://ror.org/05j6fvn87}{Singapore University of Technology and Design}, 8 Somapah Road, Singapore 487372, Republic of Singapore\looseness=-1}

\newcommand{\sit}{Engineering Cluster, \href{https://ror.org/01v2c2791}{Singapore Institute of Technology}, 1 Punggol Coast Road, Singapore 828608, Republic of Singapore\looseness=-1}

\hypersetup{
  colorlinks=true,
  linkcolor=2e308d,
  citecolor=2e308d,
  urlcolor=2e308d
}

\graphicspath{{figures/}{../figures/}{scripts/output/figures/}}
\DeclareGraphicsExtensions{.pdf,.png,.jpg}

\newtheorem{theorem}{Theorem}
\newtheorem{lemma}[theorem]{Lemma}
\newtheorem{proposition}[theorem]{Proposition}

\theoremstyle{definition}

\theoremstyle{remark}

\crefname{claim}{claim}{claims}
\Crefname{claim}{Claim}{Claims}

\IfFileExists{macros/operators.tex}{%

\DeclarePairedDelimiterX{\oointerval}[2]{\lparen}{\rparen}{#1,\,#2} %
\DeclarePairedDelimiterX{\ocinterval}[2]{\lparen}{\rbrack}{#1,\,#2} %
\DeclarePairedDelimiterX{\cointerval}[2]{\lbrack}{\rparen}{#1,\,#2} %
\DeclarePairedDelimiterX{\ccinterval}[2]{\lbrack}{\rbrack}{#1,\,#2} %

\DeclarePairedDelimiterX{\meanbrackets}[1]{\lbrack}{\rbrack}{%
   #1%
}
\DeclareMathOperator*{\meansymbol}{\mathbb{E}}
\NewDocumentCommand{\mean}{e{_}}{%
  \meansymbol%
  \IfNoValueF{#1}{\sb{#1}}%
  \meanbrackets%
} %

\DeclarePairedDelimiterX{\probparen}[1]{\lparen}{\rparen}{%
   #1%
}
\DeclareMathOperator*{\probsymbol}{Pr}
\NewDocumentCommand{\prob}{e{_}}{%
  \probsymbol%
  \IfNoValueF{#1}{\sb{#1}}%
  \probparen%
} %

\DeclareMathOperator{\simplex}{\Delta} %

\newcommand{\amatrix}[1]{\mathbf{#1}} %

\DeclarePairedDelimiterX{\inner}[2]{\langle}{\rangle}{#1,\,#2} %
}{}
\IfFileExists{macros/constants.tex}{%

\providecommand{%
  \naturals}{\mathbb{N}}            %

\providecommand{%
  \integers}{\mathbb{Z}}            %

\providecommand{%
  \nonnegativeintegers
}{%
  \integers_{\geq 0}}               %

\providecommand{%
  \rationals}{\mathbb{Q}}           %

\providecommand{%
  \reals}{\mathbb{R}}               %

\providecommand{%
  \simplex}{\Delta}                 %

\providecommand{%
  \nonnegativereals
}{%
  \reals_{\geq 0}}                  %

\providecommand{%
  \complex}{\mathbb{C}}             %

\providecommand{%
  \complextorus}{\mathbb{T}}        %

\providecommand{%
  \differentiable}{\mathcal{C}}     %

\providecommand{%
  \polynomials}{\reals}             %

\providecommand{%
  \apolynomial}{p}                  %

\providecommand{%
  \sumofsquares}{\mathbbm{\Sigma}}  %

\providecommand{%
  \quadraticmodule}{Q}              %

\providecommand{%
  \asumofsquares}{\sigma}           %

\providecommand{%
  \unitvector}{e}                   %

\providecommand{%
  \identitymatrix}{\amatrix{I}}     %

\providecommand{%
  \hessian}{\amatrix{H}}            %

\providecommand{%
  \jacobian}{\amatrix{J}}           %

\providecommand{%
  \symmetrizedjacobian%
}{\amatrix{SJ}}                     %

\providecommand{%
  \mineigenvalue%
}{%
  \lambda_{\mathrm{min}}%
}                                   %

\providecommand{%
  \maxeigenvalue%
}{%
  \lambda_{\mathrm{max}}%
}                                   %

\providecommand{%
  \im}{i}                           %
  
\providecommand{%
  \normaldistribution%
}{\ensuremath{\mathcal{N}}}         %
}{}

\pgfplotsset{compat=1.18}

\begin{document}

\title{Training Variational Quantum Algorithms Is NP-Hard, Even Locally}

\author{Dax Enshan Koh\,\orcidlink{0000-0002-8968-591X}}
\email{dax.koh@singaporetech.edu.sg}
\affiliation{\sit}
\affiliation{\smt}
\affiliation{\qinc}

\author{Triscia Mundo\,\orcidlink{0009-0009-1332-5860}}
\email{triscia\_mundo@mymail.sutd.edu.sg}
\affiliation{\esd}

\author{Iosif Sakos\,\orcidlink{0000-0002-1871-9078}}
\email{iosif\_sakos@sutd.edu.sg}
\affiliation{\esd}

\author{Antonios Varvitsiotis\,\orcidlink{0000-0002-9658-6469}}
\email{antonios@sutd.edu.sg}
\affiliation{\esd}

\affiliation{\href{https://ror.org/01mgdzc49}{Centre for Quantum Technologies}, \href{https://ror.org/01tgyzw49}{National University of Singapore}, Singapore}
\affiliation{Archimedes Research Unit on AI, Data Science and Algorithms, Athena RC, Marousi, Greece}

\begin{abstract}                      

Variational quantum algorithms (VQAs) generally rely on classical optimization to train parameterized quantum circuits. This training seeks to minimize an objective function, and its efficiency is central to the practical success of these algorithms. However, globally minimizing such training objectives over the circuit parameters is known to be $\mathsf{NP}$-hard, limiting the prospect of general guarantees for efficient training. In this Letter, we prove that even the weaker task of finding a local minimum of such VQA training objectives is strongly $\mathsf{NP}$-hard, including when the objective admits efficient classical evaluation. Moreover, we show that this hardness persists even for the task of finding a parameter vector within $\ell_p$-distance strictly less than $\pi/2$ of some local minimizer, for every $p\geq 1$. Our central technical result is that approximating a local minimizer of a Hermitian trigonometric polynomial is strongly $\mathsf{NP}$-hard. By explicitly constructing quantum circuits whose training objectives reproduce these hard instances, we obtain a polynomial-time reduction to VQA training. Our results establish a fundamental computational barrier to variational quantum training: even reaching the vicinity of a local minimum remains hard in the worst case.

\end{abstract}

\maketitle

\section{Introduction}

Quantum algorithms promise to transform our ability to solve problems beyond the practical reach of classical computers \cite{dalzell2025quantum,montanaro2016quantum}. Realizing this promise, however, requires accounting for their full computational cost, including the resources consumed by their classical components \cite{dalzell2025quantum}. This issue is particularly important for variational quantum algorithms (VQAs), which generally use classical optimization to train parameterized quantum circuits \cite{cerezo2021variational,mcClean2016theory,bharti2022noisy,wang2026review}. This variational framework underlies a myriad of applications, including ground-state energy estimation \cite{peruzzo2014variational,tilly2022variational,you2026exploring}, combinatorial optimization \cite{farhi2014quantum,herrman2022multi-angle,vijendran2024expressive,blekos2024review}, solving differential equations \cite{lubasch2020variational,leong2022variational,sarma2024quantum,leong2024solving}, and quantum machine learning \cite{havlicek2019supervised,mitarai2018quantum,wu2022challenges,wan2017quantum}. Across these applications, a costly classical search for suitable circuit parameters can offset potential computational gains from quantum processing \cite{bittel2021training,anschuetz2022quantum,anschuetz2022critical,bittel2023optimal}.

In a common formulation of VQA training, the circuit parameters $\theta$ are chosen to minimize the training objective $F(\theta)=\langle 0| U(\theta)^\dag HU(\theta)|0\rangle$, where $U(\theta)$ is the unitary operator corresponding to a parameterized quantum circuit and $H$ is a Hermitian observable encoding the task \cite{mcClean2016theory,cerezo2021variational}. Two distinct difficulties can obstruct efficient training. The first is the measurement cost of estimating objective values or gradients of $F$ \cite{wecker2015progress,wang2021minimizing,straaten2021measurement,gonthier2022measurements}. This cost can be particularly severe in the presence of barren plateaus, where resolving gradients can require exponentially many measurements as the system size grows \cite{mcclean2018barren,cerezo2021cost,arrasmith2021effect,arrasmith2022equivalence,zhang2024exponential,larocca2025barren}. The second is the computational difficulty of searching the parameter space. 
Globally minimizing $F$ over $\theta$ is known to be $\mathsf{NP}$-hard in the worst case, even when $F$ admits efficient classical evaluation \cite{bittel2021training}, posing a fundamental obstacle to general guarantees of efficient training.

In practice, VQA training often seeks to improve the objective through successive parameter updates guided by local information about $F$ \cite{schuld2019evaluating,stokes2020quantum,sweke2020stochastic,harrow2021low}. Local minimizers are natural targets for such methods \cite{anschuetz2022quantum,wierichs2020avoiding}: they are parameter vectors for which no sufficiently small change lowers the objective. Seeking a local minimizer instead of a global one could, in principle, make training more tractable \cite{cartis2022evaluation}. Crucially, the $\mathsf{NP}$-hardness of global optimization does not, ipso facto, establish the hardness of this weaker task, since a landscape may admit local minima that are easy to find even when finding a global minimum is hard. This raises a basic question for VQA training: can the computational barrier be avoided if the goal is only to find a local minimizer, or even a parameter vector close to one?

In this Letter, we answer this question in the negative. Specifically, we prove that finding a local minimizer of a VQA training objective is strongly $\mathsf{NP}$-hard, even when the objective admits efficient classical evaluation. Here, strong $\mathsf{NP}$-hardness means that the problem remains $\mathsf{NP}$-hard even when the numerical quantities specifying the instance are polynomially bounded in the problem size \cite{garey1978strong}. Moreover, we show that this hardness persists even when the goal is merely to find a parameter vector within $\ell_p$-distance strictly less than $\pi/2$ of some local minimizer, for every $p \geq 1$. Notably, this hardness already holds for circuits consisting of single-qubit rotations and a commuting diagonal two-local observable. Our results therefore show that relaxing global optimality does not eliminate the worst-case computational barrier to VQA training: even reaching the vicinity of a local minimizer remains hard, independently of measurement uncertainty.

The key technical ingredient in our proof is the strong $\mathsf{NP}$-hardness of approximating a local minimizer of a Hermitian trigonometric polynomial, even when the total Fourier degree is at most two. We establish this hardness result for Hermitian trigonometric polynomials in two steps. First, we obtain a polynomial-time reduction from \textsc{Stable Set} to the local minimization of a quadratic function $q$ over $[0,1]^n$. For a graph $G$ and positive integer threshold $k$, the construction exhibits a sharp dichotomy: if the stability number $\alpha(G)<k$, then the origin is the unique local minimizer, whereas if $\alpha(G)\ge k$, then every local minimizer has at least one coordinate equal to $1$. This is a bounded-box analogue of the dichotomy originally shown by Ahmadi and Zhang~\cite{ahmadi2022complexity} for quadratic minimization over a scaled simplex. Second, we apply a trigonometric substitution that converts the quadratic objective into a Hermitian trigonometric polynomial while preserving local minimality. The dichotomy then translates into a separation between the possible locations of local minimizers in parameter space, yielding the claimed hardness of approximation; see \cref{fig:local_minimizer_separation}.

To establish the corresponding hardness for VQA training, we use the connection between trigonometric polynomials and expectation values of parameterized quantum circuits~\cite{you2021exponentially,schuld2021effect,wierichs2022general,nemkov2023fourier,sakos2026globaloptpqc}. For each hard trigonometric polynomial $f$, we explicitly construct a polynomial-size circuit and observable whose training objective reproduces it exactly. The resulting identity $F(\theta) = f(\theta)$ preserves the local minimizers and distances to them, completing the polynomial-time reduction to VQA training.

\definecolor{mapoutline}{HTML}{293B49}
\definecolor{mapfill}{HTML}{EAF3F9}
\definecolor{mappoint}{HTML}{1C55AE}

\section{Main result}

We begin with a few definitions and conventions for representing the optimization instances. A \textit{Hermitian trigonometric polynomial} in $n$ variables is a function of the form
\begin{align}\label{HTP}
    f(\theta)
    =
    \sum_{\alpha\in\mathcal{A}}
    c_\alpha \mathrm{e}^{\mathrm{i}\alpha^{\mathsf{T}}\theta}, \qquad
    \theta \in \mathbb R^n,
\end{align}
where $\mathcal A \subset \mathbb Z^n$ is a finite symmetric set ($\mathcal A = -\mathcal A$) and $c_{-\alpha}=\overline{c_\alpha}$ for every $\alpha \in \mathcal A$ \cite{dumitrescu2017multivariate}. Its total \textit{Fourier degree} is
$\mathrm{deg}_F(f) = \max_{
\alpha \in \mathcal A: c_\alpha\neq 0} \|\alpha\|_1 
$, with $\mathrm{deg}_F(0)=0$.

For our complexity statements, all instances are specified explicitly using integers and rational numbers encoded in binary; the input size is the total description length.

Our main result on the computational hardness of VQA training is as follows:

\begin{theorem}\label{thm:vqa-main}
Finding a local minimizer of an efficiently represented VQA training objective $F(\theta)$ is strongly $\mathsf{NP}$-hard. More precisely, for every $p \ge 1$, it is strongly $\mathsf{NP}$-hard to find a parameter vector in $[0,2\pi]^n$ within $\ell_p$-distance strictly less than $\pi/2$ of some local minimizer.
\end{theorem}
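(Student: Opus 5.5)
We reduce from \textsc{Stable Set}, following the two-step route sketched in the introduction: first a quadratic program over the box, then a trigonometric substitution, then a circuit realization.

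\emph{Step 1 (quadratic over the box).} Given a graph $G=(V,E)$ with $|V|=n$ and a threshold $k$, we build a quadratic $q(x)=x^{\mathsf T}Qx+b^{\mathsf T}x$ on $[0,1]^n$ with integer, polynomially bounded coefficients. A natural candidate is a Motzkin--Straus-type form: take $Q$ with large positive entries on edges and a negative identity-like term that rewards spreading mass over stable sets. The linear term $b$ is tuned by $k$ so that the following dichotomy holds.
\begin{itemize}
\item If $\alpha(G)<k$, then $0$ is the unique local minimizer of $q$ on $[0,1]^n$.
\item If $\alpha(G)\ge k$, then every local minimizer has some coordinate equal to $1$.
\end{itemize}
The proof adapts Ahmadi--Zhang's simplex argument. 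We analyze the KKT conditions and second-order conditions on faces of the box. Coordinates strictly inside $(0,1)$ must lie on a stable set, because edges make the restricted Hessian indefinite. Summing the stationarity equations then compares the support size with $k$.

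\emph{Step 2 (trigonometric substitution).} Set $x_j=\sin^2(\theta_j/2)=(1-\cos\theta_j)/2$. This map sends $\mathbb R^n$ onto $[0,1]^n$ and is open at interior points. At boundary points, $\theta_j\in\{0,\pi\}$ modulo $2\pi$, the map folds, so local minimality is preserved in both directions. To check this, write $x_j-x_j^\ast\approx\pm(\theta_j-\theta_j^\ast)^2/4$ near a fold and use the sign of the KKT multiplier at $x^\ast$. Then $f(\theta)=q(x(\theta))$ is a Hermitian trigonometric polynomial of Fourier degree at most two, with polynomially bounded rational coefficients.

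The local minimizers of $f$ in $[0,2\pi]^n$ are preimages of local minimizers of $q$. In the ``no'' case they all lie in $\{0,2\pi\}^n$. In the ``yes'' case every local minimizer has some coordinate equal to $\pi$. Now suppose we are given a point $\hat\theta$ within $\ell_p$-distance $<\pi/2$ of some local minimizer $\theta^\ast$. Since $\|\cdot\|_\infty\le\|\cdot\|_p$, every coordinate of $\hat\theta$ is within $\pi/2$ of the corresponding coordinate of $\theta^\ast$. So $\hat\theta$ has some coordinate in $(\pi/2,3\pi/2)$ exactly in the ``yes'' case. Checking this decides \textsc{Stable Set}. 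Because all numbers are polynomially bounded, the hardness is strong.

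\emph{Step 3 (circuit).} Realize $F=f$ with $n$ qubits. Apply $R_X(\theta_j)$ to $|0\rangle$ on qubit $j$, which gives $\langle Z_j\rangle=\cos\theta_j$. Choose a diagonal observable $H=c_0 I+\sum_j a_j Z_j+\sum_{ij}c_{ij}Z_iZ_j$. Since the product state factorizes, $\langle Z_iZ_j\rangle=\cos\theta_i\cos\theta_j$, so expanding $q((1-\cos\theta)/2)$ fixes the rational coefficients of $H$. The objective is then classically evaluable in polynomial time, and $F(\theta)=f(\theta)$ exactly, so local minimizers and distances are unchanged.

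The main obstacle is Step 1: designing $b$ and $Q$ so that the dichotomy holds for \emph{all} local minimizers, including spurious boundary ones in the ``yes'' case. The difficulty is that box faces give more KKT points than the simplex does. I would first try $q(x)=\sum_{ij\in E}Mx_ix_j-\sum_i x_i^2+\frac{2}{k}\bigl(\sum_i x_i\bigr)$-type penalties and check the face-by-face second-order conditions. A secondary point of care is the folding argument in Step 2 at vertices where several coordinates sit at the boundary simultaneously.
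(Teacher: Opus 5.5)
Your overall architecture (\textsc{Stable Set} $\to$ box QP $\to$ substitution $x_j=\sin^2(\theta_j/2)$ $\to$ circuit, with the coordinate test in $(\pi/2,3\pi/2)$) matches the paper. However, Step~1, which carries the whole reduction, is left undone, and your concrete candidate fails. Take $q(x)=\sum_{ij\in E}Mx_ix_j-\sum_i x_i^2+\frac{2}{k}\sum_i x_i$ and suppose $k\ge 2$, which holds in every no-instance since $\alpha(G)\ge 1$. At any vertex $e_i$, $\partial_i q(e_i)=-2+\frac{2}{k}<0$ with $x_i$ at its upper bound, and $\partial_j q(e_i)\ge\frac{2}{k}>0$ for $j\neq i$ with $x_j$ at its lower bound. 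So every feasible direction is first-order ascending, and $e_i$ is a strict local minimizer with a coordinate equal to $1$. This destroys the dichotomy in every no-instance. Worse, the $-\sum_i x_i^2$ term gives the Hessian a negative diagonal, so no local minimizer has a coordinate in $(0,1)$, and your planned stable-set/stationarity-summing analysis never engages.

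The missing idea is to drop the linear term and use the homogeneous form $q(x)=x^{\mathsf T}Mx$ with $M=(k-\tfrac12)(A+I)-J$. By Motzkin--Straus, as used by Ahmadi--Zhang, this form is strictly copositive precisely when $\alpha(G)<k$. Homogeneity then replaces all face-by-face KKT work. If $\alpha(G)<k$, then $q((1-\lambda)x)=(1-\lambda)^2q(x)<q(x)$ for every nonzero box point $x$, so $0$ is the unique local minimizer. If $\alpha(G)\ge k$, Ahmadi--Zhang show $q$ has no local minimizer on $\mathbb{R}^n_{\ge 0}$. A box point with all coordinates $<1$ has a box neighborhood that is also an orthant neighborhood, so every box local minimizer has a coordinate equal to $1$, and one exists by compactness.

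Step~3 also fails as written. With one $R_X(\theta_i)$ per parameter, the expectation is affine in $(\cos\theta_i,\sin\theta_i)$ for each $i$ separately; indeed $\langle Z_iZ_i\rangle=1\neq\cos^2\theta_i$. But the diagonal terms $M_{ii}T(\theta_i)^2$ of $f$ contain $\cos 2\theta_i$. Both your candidate ($Q_{ii}=-1$) and the working $M$ ($M_{ii}=k-\tfrac32$) have nonzero diagonal, so $F\neq f$. The paper avoids this by duplicating each parameter on two qubits: it uses $U(\theta)=\prod_i R_{x,i}(\theta_i)R_{x,n+i}(\theta_i)$ on $2n$ qubits with $H=\tfrac14\sum_{i,j}M_{ij}(I-Z_i)(I-Z_{n+j})$. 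Then every $T(\theta_i)T(\theta_j)$, including $i=j$, is a product of expectations on distinct qubits.

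Finally, the fold in Step~2 is not an obstruction and needs no KKT multipliers, which would anyway be inconclusive when a multiplier vanishes. The map $\theta\mapsto(1-\cos\theta)/2$ is open from $[0,2\pi]$ onto $[0,1]$ in the relative topology, e.g.\ $[0,\epsilon)\mapsto[0,\sin^2(\epsilon/2))$. Hence so is the coordinatewise map onto $[0,1]^n$, and for any continuous open surjection local minimality transfers in both directions by a purely topological argument.
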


The proof of \cref{thm:vqa-main} relies on the following hardness result for Hermitian trigonometric polynomials.

\begin{proposition}\label{thm:htp-nphard}
For every $p \geq 1$, approximating a local minimizer of an efficiently represented Hermitian trigonometric polynomial of total Fourier degree at most $2$ within $\ell_p$-distance strictly less than $\pi / 2$ is strongly $\mathsf{NP}$-hard.
\end{proposition}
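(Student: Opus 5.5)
We follow the two-step route announced in the introduction. \emph{Step 1} reduces \textsc{Stable Set} to local minimization of a quadratic $q$ over the box $[0,1]^n$. Given a graph $G=([n],E)$ and a threshold $k$, we adapt the Ahmadi--Zhang construction from the scaled simplex to the box. A natural first try is
\[
q(x)=\sum_{i} x_i^2\,(\cdot)\;-\;\lambda\Big(\sum_i x_i\Big)^2+\mu\sum_{\{i,j\}\in E}x_ix_j ,
\]
with integer coefficients polynomial in $n$ (for example depending on $k$). Two properties are required:
\begin{itemize}
\item[(a)] if $\alpha(G)<k$, then $q$ is strictly positive on $[0,1]^n\setminus\{0\}$ near every ray, so that $0$ is the unique local minimizer;
\item[(b)] if $\alpha(G)\ge k$, then $q$ takes negative values along the direction of a stable set, and every KKT point with all coordinates in $[0,1)$ fails to be a local minimizer.
\end{itemize}
For (a) we use the Motzkin--Straus identity, which expresses $\alpha(G)$ through a quadratic program on the simplex. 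Since $q$ is homogeneous, positivity on the simplex gives positivity on the whole cone. For (b) we argue as follows. A local minimizer $x$ with $\max_i x_i<1$ has only inactive upper bounds, so it is a local minimizer of the homogeneous quadratic over the orthant. Homogeneity then forces $q(x)=0$ and makes $x$ a local minimum of the face restriction. The indefinite structure lets us scale $x$ up, or move it along a negative direction in its support, and this strictly decreases $q$. Hence some coordinate must reach $1$. We expect this step to be the main obstacle. In particular, $0$ itself must be excluded as a local minimizer when $\alpha\ge k$, and non-strict local minima on faces need care. The coefficients must be tuned so that every point with all coordinates below $1$ admits a descent direction. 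Keeping these coefficients polynomially bounded is what gives \emph{strong} hardness.

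\emph{Step 2} is the trigonometric substitution. Set $x_i=\sin^2(\theta_i/2)=(1-\cos\theta_i)/2$ and $f(\theta)=q(x(\theta))$. Each product $x_ix_j$ is a product of two degree-one cosines, so $f$ is a real (Hermitian) trigonometric polynomial of total Fourier degree at most $2$. Its coefficients are rational and polynomially bounded, so it is efficiently represented. The map $\theta\mapsto x$ sends $\mathbb R^n$ onto $[0,1]^n$, is open onto the box in each coordinate, and has $\theta_i\in\pi\mathbb Z$ exactly when $x_i\in\{0,1\}$. From this we check two facts. First, $\theta$ is a local minimizer of $f$ if and only if $x(\theta)$ is a local minimizer of $q$ on $[0,1]^n$. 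Locally, each coordinate map is either a local homeomorphism onto an interval or a fold onto a one-sided interval at an endpoint, and both preserve local minimality. Second, using $2\pi$-periodicity, we may work in $[0,2\pi]^n$.

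\emph{Separation.} In case (a), the local minimizers of $f$ are exactly the points of $2\pi\mathbb Z^n$, at which every $\theta_i\equiv 0$. In case (b), every local minimizer has some $\theta_i\equiv\pi \pmod{2\pi}$. Suppose an algorithm outputs $\hat\theta$ within $\ell_p$-distance $<\pi/2$ of some local minimizer $\theta^*$. Then $|\hat\theta_i-\theta^*_i|<\pi/2$ for every coordinate, since $\|\cdot\|_\infty\le\|\cdot\|_p$. We decide the instance by testing whether some coordinate of $\hat\theta$ lies within distance $<\pi/2$ of $\pi+2\pi\mathbb Z$; equivalently, whether $\cos\hat\theta_i<0$ for some $i$. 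In case (a) every $\cos\hat\theta_i>0$. In case (b) the coordinate with $\theta^*_i\equiv\pi$ gives $\cos\hat\theta_i<0$. This test is polynomial-time computable, which completes the reduction.
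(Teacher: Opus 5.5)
Your Step 2 and the final separation are sound and agree with the paper. The coordinatewise fold/local-homeomorphism argument is the paper's lemma that continuous open surjections preserve local minimality in both directions, and the test $\cos\hat\theta_i<0$ is its $\ell_\infty$ separation. Part (a) also works once coefficients are fixed, via Motzkin--Straus plus homogeneity.

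The gap is Step 1(b), which is the real content of the reduction. You never fix the coefficients of $q$, and the mechanism you sketch for excluding local minimizers with all coordinates below $1$ fails. Once homogeneity gives $q(x)=0$, scaling $x$ leaves $q$ unchanged. Moreover, the second-order necessary conditions at $x$ force $M_{SS}\succeq 0$ on its support $S$, so there is no negative direction inside the support to move along. Indefiniteness alone cannot work: $q(x)=(x_1-x_2)^2+x_3(x_1+x_2)-x_3^2$ is negative at $(0,0,1)$, yet $(\tfrac12,\tfrac12,0)$ is a local minimizer on $[0,1]^3$, since near it $q\ge x_3(x_1+x_2-x_3)\ge 0$. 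So (b) hinges on a specific choice of coefficients, and that is exactly what you leave open.

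The paper takes $M=(k-\tfrac12)(A+I)-J$.
\begin{itemize}
\item Case (a) follows from strict copositivity (Motzkin--Straus, via Ahmadi--Zhang) together with the radial descent $q((1-\lambda)x)=(1-\lambda)^2q(x)$.
\item Case (b) is imported from the proof of Ahmadi--Zhang's Theorem~2.5: for $\alpha(G)\ge k$, this $q$ has no local minimizer on $\mathbb{R}^n_{\ge 0}$ at all.
\end{itemize}
The missing idea is the half-integer offset. One way to see why it works: a local minimizer $x\neq 0$ on the orthant would have $q(x)=0$. Then $y=x/\sum_i x_i$ would be a local minimizer of $g(y)=y^{\mathsf T}(A+I)y$ on the simplex with $g(y)=1/(k-\tfrac12)$. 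But at any local minimizer of $g$, an induced path $a$--$b$--$c$ in the support is impossible, because the feasible direction $e_a-2e_b+e_c$ has curvature $-2$. Hence the support induces a disjoint union of $m$ cliques. The first-order conditions then give each clique mass $1/m$, so $g(y)=1/m$ with $m$ an integer, a contradiction. The point $0$ is excluded because $q(\mathbf{1}_T)=-k/2<0$ for a stable set $T$ of size $k$. With these pieces in place, the rest of your argument goes through as written.
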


The proof of \cref{thm:htp-nphard} relies on a reduction from the \textsc{Stable Set} problem.
This is an $\mathsf{NP}$-complete problem \cite{karp1972reducibility} in which, given a graph $G$ and a positive integer $k$, one must determine whether there exists a stable set of $G$, i.e., a set of pairwise non-adjacent vertices, of size at least $k$.
The size of the largest such set is called the \emph{stability number} of $G$, and is denoted by $\alpha(G)$.

The first step of the proof is to establish the dichotomy of local minimizers for the quadratic form associated with the \textsc{Stable Set} problem, as stated in \cref{lemma:quadratic-bounded}.
This lemma is a modification of the result by \citet{ahmadi2022complexity}. %

\begin{figure}
  \centering
  \begin{tikzpicture}[
    font=\small,
    every node/.append style={font=\small}
  ]
    \def\plotpadding{0.025}

    \tikzset{
      minimizer outline/.style={
        circle,
        minimum size=6pt,
        inner sep=0pt,
        fill=white,
        draw=white
      },
      minimizer point/.style={
        minimizer outline,
        minimum size=4pt,
        fill=black,
        draw=black,
        line width=0.4pt
      },
      pics/minimizer/.style args={#1}{
        code={
          \node[minimizer outline] {};
          \node[minimizer point, fill=#1] {};
        }
      },
      pics/minimizer/.default=black,
      minimizer set/.style args={#1}{
        /pgfplots/on layer=axis background,
        draw=#1,
        opacity=0.8,
        line width=2.5pt,
        line cap=butt,
        line join=miter
      },
      minimizer set/.default=black
    }

    \pgfplotsset{
      layers/with background/.define layer set={
        background,
        axis background,
        axis grid,
        axis ticks,
        axis lines,
        axis tick labels,
        pre main,
        main,
        axis descriptions,
        axis foreground
      }{
        /pgfplots/layers/standard
      },
      transformed box/.style={
        xmin=0,
        xmax={2*pi},
        ymin=0,
        ymax={2*pi},
        xtick={0,pi,2*pi},
        xticklabels={$0$,$\pi$,$2\pi$},
        ytick={0,pi,2*pi},
        yticklabels={$0$,$\pi$,$2\pi$}
      },
      box/.style={
        xmin=0,
        xmax=1,
        ymin=0,
        ymax=1,
        xtick={0,1},
        ytick={0,1}
      }
    }

    \begin{groupplot}[
      group style={
        group size=2 by 2,
        horizontal sep=0.10\columnwidth,
        vertical sep=9mm
      },
      width=0.34\columnwidth,
      height=0.34\columnwidth,
      scale only axis,
      tick label style={font=\small},
      set layers=with background,
      axis lines=middle,
      axis line style={-{Stealth}},
      axis equal image,
      enlarge x limits=\plotpadding,
      enlarge y limits=\plotpadding
    ]

      \nextgroupplot[transformed box]

      \pic at (axis cs:0,0) {minimizer=matlabOrange};
      \pic at (axis cs:0,{2*pi}) {minimizer=matlabOrange};
      \pic at (axis cs:{2*pi},0) {minimizer=matlabOrange};
      \pic at (axis cs:{2*pi},{2*pi}) {minimizer=matlabOrange};

      \node (preimage) at (axis cs:{pi},{pi})
        {$T^{-1}(x^{\star})$};

      \draw[-{Stealth}, semithick, opacity=0.5, shorten >=3pt]
        (preimage) -- (axis cs:0,0);
      \draw[-{Stealth}, semithick, opacity=0.5, shorten >=3pt]
        (preimage) -- (axis cs:0,{2*pi});
      \draw[-{Stealth}, semithick, opacity=0.5, shorten >=3pt]
        (preimage) -- (axis cs:{2*pi},0);
      \draw[-{Stealth}, semithick, opacity=0.5, shorten >=3pt]
        (preimage) -- (axis cs:{2*pi},{2*pi});

      \nextgroupplot[
        transformed box,
        legend to name=minimizer-legend,
        legend columns=1,
        legend style={
          font=\small,
          draw=none,
          cells={anchor=west},
          /tikz/row sep=1pt
        }
      ]

      \draw[minimizer set=matlabOrange]
        (axis cs:0,{pi}) -- (axis cs:{2*pi},{pi});
      \draw[minimizer set=matlabOrange]
        (axis cs:{pi},0) -- (axis cs:{pi},{2*pi});

      \addlegendimage{
        only marks,
        mark=*,
        mark size=2pt,
        mark options={
          fill=matlabOrange,
          draw=black,
          line width=0.4pt
        }
      }
      \addlegendentry{Possible local minimizers in $\Theta$}

      \addlegendimage{
        only marks,
        mark=*,
        mark size=2pt,
        mark options={
          fill=matlabBlue,
          draw=black,
          line width=0.4pt
        }
      }
      \addlegendentry{Possible local minimizers in $\mathcal{X}$}

      \nextgroupplot[box]

      \pic at (axis cs:0,0) {minimizer=matlabBlue};
      \node[
        anchor=south west,
        xshift=4pt,
        yshift=4pt,
        inner sep=0pt
      ] at (axis cs:0,0) {$x^{\star}=0$};

      \nextgroupplot[box]

      \draw[minimizer set=matlabBlue]
        (axis cs:0,1) -- (axis cs:1,1) -- (axis cs:1,0);

    \end{groupplot}

    \node[text=matlabOrange, xshift=-3mm]
      (theta-space) at (group c1r1.left of west)
      {$\Theta$};

    \node[text=matlabBlue]
      (x-space) at (theta-space |- group c1r2.left of west)
      {$\mathcal{X}$};

    \draw[-{Stealth}, thick]
      (theta-space) -- node[left] {$T$} (x-space);

    \node[anchor=south]
      (no-instance) at (group c1r1.above north)
      {No-Instance ($\alpha(G)<k$)};

    \node[anchor=south]
      (yes-instance) at (group c2r1.above north)
      {Yes-Instance ($\alpha(G)\geq k$)};

    \node[anchor=north, yshift=-3mm]
      at ($(group c1r2.below south)!0.5!(group c2r2.below south)$)
      {\pgfplotslegendfromname{minimizer-legend}};

  \end{tikzpicture}

  \caption{Illustration of the local minimizer separation in
  \cref{lemma:quadratic-bounded}, shown in $n=2$ dimensions.
  For $\alpha(G)<k$, the local minimizer $x^\star=0$ lifts to
  $T^{-1}(0)=\{0,2\pi\}^2$. For $\alpha(G)\geq k$, the local
  minimizers lie on the highlighted areas. Here, $\Theta = [0,2\pi]^2$ and $\mathcal X=[0,1]^2$.}
  \label{fig:local_minimizer_separation}
\end{figure}

\begin{lemma}\label{lemma:quadratic-bounded}
Let $G$ be a graph on $n$ vertices with stability number $\alpha(G)$, and let $k$ be a positive integer satisfying $1\le k\le n$. In time polynomial in $n$, we can construct a quadratic function
$q(x)
= x^{\mathsf{T}} M x$,
where $M$ is a matrix that depends on $G$ and $k$,
for which 
the local minimizers of $q$ over $[0,1]^n$ satisfy the following dichotomy:
\begin{enumerate}
\item If $\alpha(G) < k$, then $x^{\star} = 0$ is the unique local minimizer.
\item If $\alpha(G)\ge  k$, then every local minimizer $x^{\star}$ has at least one component equal to $1$.
\end{enumerate}
\end{lemma}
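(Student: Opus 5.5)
We take $M$ to be a rescaled Motzkin--Straus form. Let $A$ be the adjacency matrix of $G$, $I$ the identity, $J$ the all-ones matrix, $e$ the all-ones vector, and $s=e^{\mathsf T}x$. Define
\begin{align*}
M=(2k-1)(A+I)-2J,\qquad q(x)=(2k-1)\,x^{\mathsf T}(A+I)x-2s^2 .
\end{align*}
$M$ has integer entries of size $O(n)$ and is built in polynomial time, which is what strong hardness needs later. The tool is the Motzkin--Straus theorem: for every graph $H$ and every $x\ge 0$ supported on its vertices, $x^{\mathsf T}(A_H+I)x\ge s^2/\alpha(H)$, with equality at the normalized indicator of a maximum stable set. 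We also use that $q$ is homogeneous of degree two, so $q(tx)=t^2q(x)$.

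\emph{Case $\alpha(G)<k$.} Here $\alpha(G)\le k-1$. For $x\ge0$ with $x\neq0$, Motzkin--Straus gives
\begin{align*}
q(x)\ge \Big(\tfrac{2k-1}{k-1}-2\Big)s^2>0 .
\end{align*}
(If $k=1$ the case is vacuous.) So $0$ is a local, indeed global, minimizer. Now take any $x^\star\neq0$ in $[0,1]^n$. For $t<1$ close to $1$ the point $tx^\star$ stays in the box and $q(tx^\star)=t^2q(x^\star)<q(x^\star)$. Hence $x^\star$ is not a local minimizer, and $0$ is the unique one.

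\emph{Case $\alpha(G)\ge k$.} Take a stable set $S$ with $|S|=k$. Then
\begin{align*}
q(\mathbf 1_S)=(2k-1)k-2k^2=-k<0 .
\end{align*}
So $q(\varepsilon\mathbf 1_S)<0=q(0)$ for every $\varepsilon>0$, and $0$ is not a local minimizer. Now suppose $x^\star\neq0$ is a local minimizer with all coordinates strictly below $1$; we derive a contradiction. For $t$ near $1$, on both sides, $tx^\star$ lies in the box, so $t\mapsto t^2q(x^\star)$ must have a local minimum at $t=1$. This forces $q(x^\star)=0$.

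The degenerate case $q(x^\star)=0$ is the main obstacle, since homogeneity alone does not exclude it. Let $P=\{i: x^\star_i>0\}$. Near $x^\star$, the set of points of $[0,1]^n$ supported on $P$ is an open neighbourhood of $x^\star$ in $\mathbb R^P$, because all coordinates of $x^\star$ on $P$ lie in $(0,1)$. On this neighbourhood the restriction of $q$ is the quadratic form $y^{\mathsf T}M_Py$, and it attains an unconstrained local minimum at $x^\star_P$. Therefore the gradient vanishes, $M_Px^\star_P=0$, and the Hessian is positive semidefinite, $M_P\succeq0$.

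We exploit these two conditions through the induced subgraph $G[P]$. First, positive semidefiniteness rules out large stable sets there. If $G[P]$ contained a stable set of size $k$, its indicator $y$ would give $y^{\mathsf T}M_Py=-k<0$. Hence $\alpha(G[P])\le k-1$. Second, applying Motzkin--Straus on $G[P]$ to $x^\star_P>0$ gives
\begin{align*}
q(x^\star)\ge\Big(\tfrac{2k-1}{k-1}-2\Big)s^2>0 ,
\end{align*}
which contradicts $q(x^\star)=x^{\star\mathsf T}Mx^\star=0$. Therefore every local minimizer has some coordinate equal to $1$.

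In the write-up, the case $q(x^\star)\neq0$ will be dispatched by the one-line scaling argument above. The second-order argument on the support face $P$ is the step that needs the most care: specifically, checking that the restriction to $\mathbb R^P$ really is an unconstrained neighbourhood of $x^\star_P$.
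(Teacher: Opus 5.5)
Your proof is correct, and it follows the paper's construction: your $M$ is exactly twice the paper's $M=(k-\tfrac12)(A+I)-J$. Case~1 is also the paper's argument. The paper cites Ahmadi--Zhang (Corollary~2.3) for strict copositivity, where you derive it from Motzkin--Straus. It then uses the same radial scaling $q((1-\lambda)x)=(1-\lambda)^2q(x)$.

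The real difference is Case~2. The paper does not argue on the box directly. It cites the fact, taken from the proof of Ahmadi--Zhang's Theorem~2.5, that $q$ has no local minimizer at all over the orthant $\mathbb{R}^n_{\ge0}$ when $\alpha(G)\ge k$. It then notes that near any point with all coordinates below $1$ the box coincides locally with the orthant, so such a point cannot be a local minimizer over the box either. You instead prove the needed statement from scratch:
\begin{itemize}
\item $0$ is not a local minimizer, since $q(\varepsilon\mathbf 1_S)<0$;
\item $q(x^\star)=0$;
\item the second-order condition $M_P\succeq0$ on the support face caps $\alpha(G[P])$ at $k-1$;
\item Motzkin--Straus on $G[P]$ then contradicts $q(x^\star)=0$.
\end{itemize}
This is essentially a re-derivation of the orthant result the paper imports. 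Your version is self-contained but longer. The paper's is shorter and makes explicit the structural point that the box differs from the orthant only near the faces $x_i=1$.

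Incidentally, your scaling step giving $q(x^\star)=0$ is redundant. Once you have $M_Px^\star_P=0$, it follows that $q(x^\star)=x_P^{\star\mathsf T}M_Px^\star_P=0$.

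One small fix is needed. Your final bound divides by $k-1$, and Case~2 with $k=1$ is not vacuous, since $\alpha(G)\ge1$ always holds. For $k=1$ you should stop one line earlier: $\alpha(G[P])\le0$ is impossible for $P\neq\emptyset$. Equivalently, $M_{ii}=2k-3=-1<0$ already violates $M_P\succeq0$.
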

\begin{proof}
Let $A$ be the adjacency matrix of $G$, and let $I$ and $J$ denote the identity matrix and the all-ones matrix, respectively; the matrix in the construction is $M=\left(k-\frac{1}{2}\right)(A+I)-J$.
Suppose first that the stability number $\alpha(G)$ is less than $k$.
In this case, by \citet[Corollary~2.3]{ahmadi2022complexity}, we have that $q$ is strictly copositive in $\mathbb{R}_{\geq 0}^n$, i.e., $q(x) > 0$ for all nonzero $x \in \mathbb{R}_{\geq 0}^n$ (and $q(0) = 0$).
Therefore, $x = 0$ is the unique global minimizer of $q$ over $\mathbb{R}_{\geq 0}^n$.
However, this does not immediately imply that $x = 0$ is the unique local minimizer, which we establish next.

Consider a nonzero point $x \in [0,1]^n$, and let $u = 0-x$ be the direction from $x$ towards $0$.
Since $q$ is a homogeneous quadratic, it follows that for every $\lambda \in (0,1)$,
\begin{align}
q(x+\lambda u)
= q\bigl((1-\lambda)x\bigr) 
= (1-\lambda)^2q(x) 
< q(x),
\end{align}
where the last inequality follows from strict copositivity. Thus, the direction $u$ is a descent direction for $q$ at $x$, and therefore no nonzero $x$ can be a local minimizer.
Consequently, $x = 0$ is the unique local minimizer of $q$ over $[0,1]^n$.

Next, suppose instead that the stability number $\alpha(G)$ is greater than or equal to $k$.
Then, since $[0,1]^n$ is compact and $q$ is continuous, $q$ has at least one minimizer over $[0,1]^n$.
Moreover, by \citet[proof of Theorem~2.5]{ahmadi2022complexity}, we have that $q$ has no local minimizer in $\mathbb{R}_{\geq 0}^n$.
Now, let $x \in [0,1]^n$ have all its components strictly less than $1$.
Then, $[0,1]^n$ contains an open neighborhood around $x$ relative to $\mathbb{R}_{\geq 0}^n$.
Thus, if $x$ were a local minimizer of $q$ over $[0,1]^n$, it would also be a local minimizer in $\mathbb{R}_{\geq 0}^n$, which is a contradiction.
Consequently, every local minimizer of $q$ over $[0,1]^n$ must have at least one component equal to $1$.
\end{proof}

Then, the following lemma shows that the dichotomy established in \cref{lemma:quadratic-bounded} is preserved under continuous open surjections, and is the main machinery used in our reduction; see \cref{fig:local-homeomorphism} for an illustration.

\begin{lemma}\label{lemma:local-homeomorphism}
Let $T:{\Theta} \to \mathcal{X}$ be a continuous open surjection, i.e., a continuous surjective map that maps open sets in $\Theta$ to open sets in $\mathcal{X}$.
Let $f:\mathcal{X} \to \mathbb{R}$, and define
\begin{align*}
F(\theta)
= f\bigl(T(\theta)\bigr).
\end{align*}
Then, for every $\theta \in \Theta$ and $x = T(\theta)$, $x$ is a local minimizer of $f$ over $\mathcal{X}$ if and only if $\theta$ is a local minimizer of $F$ over~$\Theta$.
\end{lemma}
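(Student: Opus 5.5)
The plan is to prove the two implications separately, working directly from the definition of a local minimizer: $x$ is a local minimizer of $f$ over $\mathcal X$ if there is an open neighborhood $V\ni x$ in $\mathcal X$ with $f(y)\ge f(x)$ for all $y\in V$. The analogous definition applies to $F$ over $\Theta$. Continuity of $T$ will give one direction and openness of $T$ the other. Surjectivity is not needed for the pointwise equivalence. It only guarantees that every local minimizer of $f$ actually has a preimage $\theta$, which is how the lemma is later used to transfer the dichotomy of \cref{lemma:quadratic-bounded}.

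\emph{($\Rightarrow$)} Suppose $x=T(\theta)$ is a local minimizer of $f$, witnessed by an open set $V\ni x$ in $\mathcal X$. By continuity, $U=T^{-1}(V)$ is open in $\Theta$ and contains $\theta$. For any $\theta'\in U$ we have $T(\theta')\in V$, so
\begin{align*}
F(\theta')=f\bigl(T(\theta')\bigr)\ge f(x)=F(\theta).
\end{align*}
Hence $\theta$ is a local minimizer of $F$.

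\emph{($\Leftarrow$)} Suppose $\theta$ is a local minimizer of $F$, witnessed by an open set $U\ni\theta$ in $\Theta$. Since $T$ is open, $V=T(U)$ is open in $\mathcal X$, and it contains $x=T(\theta)$. Every $y\in V$ can be written as $y=T(\theta')$ for some $\theta'\in U$. Therefore
\begin{align*}
f(y)=F(\theta')\ge F(\theta)=f(x),
\end{align*}
so $x$ is a local minimizer of $f$.

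The only point needing care is the topology. Neighborhoods must be taken relative to $\Theta$ and $\mathcal X$, for instance $[0,2\pi]^n$ and $[0,1]^n$ with their subspace topologies, and "open map" must be read in the same relative sense. The genuine obstacle is therefore not in this lemma but in its later application. There one must check that the specific trigonometric substitution $T$ used for \cref{thm:htp-nphard} is indeed open onto the box $[0,1]^n$ at boundary points. A candidate is $x_i=(1-\cos\theta_i)/2$, which near $\theta_i=0$ folds like a square map. I would verify openness coordinatewise in one dimension and then use that products of open maps are open.
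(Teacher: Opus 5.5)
Your proof is correct and follows the paper's argument exactly: continuity of $T$ pulls back the witnessing neighborhood for the forward direction, and openness of $T$ pushes the neighborhood of $\theta$ forward for the converse. Your side remark about surjectivity is also accurate: the paper invokes it in the converse step, but only openness is actually needed there for the pointwise equivalence.
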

\begin{proof}
Suppose $x = T(\theta)$ is a local minimizer of $f$ over $\mathcal{X}$.
Then there is a neighborhood $\mathcal{U}$ of $x$ such that $f(x) \leq f(x')$ for every $x' \in \mathcal{U}$.
Moreover, since $T$ is continuous, the preimage $T^{-1}(\mathcal{U})$ is a neighborhood of $\theta$.
Then, for every $\theta' \in T^{-1}(\mathcal{U})$, we have
\begin{equation}
F(\theta')
= f\bigl(T(\theta')\bigr)
\geq f(x)
= F(\theta).
\end{equation}
Hence, $\theta$ is a local minimizer of $F$ over $\Theta$.

Conversely, suppose $\theta$ is a local minimizer of $F$ over $\Theta$.
Then there exists an open (with respect to $\Theta$) neighborhood $\mathcal{V}$ of $\theta$ such that $F(\theta) \leq F(\theta')$ for every $\theta' \in \mathcal{V}$.
Since $T$ is an open surjection, $T(\mathcal{V})$ is an open neighborhood of $x = T(\theta)$ in $\mathcal{X}$.
Thus, for every $x' \in T(\mathcal{V})$, there exists $\theta' \in \mathcal{V}$ such that $x' = T(\theta')$, and we have
\begin{equation}
f(x')
= f\bigl(T(\theta')\bigr)
= F(\theta')
\geq F(\theta)
= f(x).
\end{equation}
Hence, $x$ is a local minimizer of $f$ over $\mathcal{X}$.
\end{proof}

We are now ready to present the main reduction.

\begin{proof}[Proof of \cref{thm:htp-nphard}]
Consider an instance of the \textsc{Stable Set} problem given by a graph $G$ and an integer $k \in \{1,2,\ldots,n\}$, where $n$ is the number of vertices of $G$.

Next, define the map
\begin{align}
T:[0,2\pi] \to [0,1],
\qquad
T(\theta)
= \frac{1 - \cos(\theta)}{2},
\end{align}
and abusing notation slightly, we also write $T(\theta) = \bigl(T(\theta_1),\dots,T(\theta_n)\bigr)$ for $\theta \in [0,2\pi]^n$.
The map $\theta \mapsto \cos(\theta)$ is a continuous open surjection from $[0,2\pi]$ to $[-1,1]$, and therefore so are $T$ and its componentwise extension from $[0,2\pi]^n$ to $[0,1]^n$.
This follows because these properties are preserved under composition with nonzero affine maps and under Cartesian products.

We then construct a Hermitian trigonometric polynomial $f$ by performing a change of variables on $q$ via the map $T$, i.e.,
\begin{align}\label{eq:HTPChangeOfVariables}
f(\theta)
= q\bigl(T(\theta)\bigr)
= \sum_{i = 1}^{n} \sum_{j = 1}^{n} M_{ij} T(\theta_i) T(\theta_j).
\end{align}
Observe that $f$ is indeed a Hermitian trigonometric polynomial since
\begin{align}
T(\theta_i)
= \frac{1 - \cos(\theta_i)}{2}
= \frac{1}{2} - \frac{1}{4}\bigl(\mathrm{e}^{\mathrm{i}\theta_i} + \mathrm{e}^{-\mathrm{i}\theta_i}\bigr),
\end{align}
and therefore by expanding the terms $T(\theta_i)T(\theta_j)$ in $q\bigl(T(\theta)\bigr)$ and collecting identical frequencies, we have
\begin{align}
f(\theta)
= \sum_{\alpha \in \mathcal{A}} c_{\alpha} \mathrm{e}^{\mathrm{i}\alpha^{\mathsf{T}}\theta},
\end{align}
where $\mathcal{A} = \{\alpha \in \mathbb{Z}^n : \lVert\alpha\rVert_1 \leq 2\}$, and the coefficients $c_{\alpha}$ are linear combinations of the entries of the matrix $M$, and satisfy $c_{-\alpha} = c_{\alpha} = \overline{c_{\alpha}}$.
Thus, $f$ is a Hermitian trigonometric polynomial, and both its frequencies $\alpha$ and coefficients $c_{\alpha}$ can be constructed in polynomial time.

We now use \cref{lemma:local-homeomorphism} to transfer the dichotomy of local minimizers from $q$ over $[0,1]^n$ to $f$ over $[0,2\pi]^n$.
In particular, if the stability number $\alpha(G) < k$, then $0$ is the unique local minimizer of $q$, and therefore by \cref{lemma:local-homeomorphism}, $\theta$ is a local minimizer of $f$ if and only if $T(\theta) = 0$, i.e., for all coordinates $i$,
\begin{align}
\frac{1 - \cos(\theta_i)}{2}
= 0
\implies
\theta_i \in \{0,2\pi\}.
\end{align}
On the other hand, if the stability number $\alpha(G) \ge  k$, then every local minimizer $x^{\star}$ of $q$ has at least one component equal to $1$, and therefore by \cref{lemma:local-homeomorphism}, every local minimizer $\theta$ of $f$ satisfies $T(\theta_i)= 1$ for at least one $i$, i.e.,
\begin{align}
\frac{1 - \cos(\theta_i)}{2}
= 1
\implies
\theta_i = \pi.
\end{align}

Next, we translate this dichotomy into an inapproximability result.
For this, suppose there exists a polynomial-time algorithm that returns a point $\theta \in [0,2\pi]^n$ within $\ell_{\infty}$-distance strictly less than $\pi / 2$ of some local minimizer $\theta^{\star}$ of $f$.
If $\alpha(G) < k$, then $\theta^{\star} \in \{0,2\pi\}^n$, and therefore for each coordinate $i$,  we have 
\begin{align}
\lvert\theta_i\rvert
< \frac{\pi}{2}\  \text{ or }\  \lvert\theta_i\rvert
> \frac{3\pi}{2}. 
\end{align}
On the other hand, if $\alpha(G)\ge k$, then there exists a coordinate $i$ such that $\theta_i^{\star} = \pi$, and therefore $\lvert\theta_i - \pi\rvert < \frac{\pi}{2}$, which implies
\begin{align}
\frac{\pi}{2}
< \lvert\theta_i\rvert
< \frac{3\pi}{2}.
\end{align}
Therefore, we can distinguish between the cases $\alpha(G) < k$ and $\alpha(G) \ge k$ by examining the magnitude of the components of $\theta$.
This can be performed in polynomial time, and thus the reduction is complete.

Finally, as $\lVert\theta - \theta^{\star}\rVert_{\infty} \leq \lVert\theta - \theta^{\star}\rVert_p$ for all $p \geq 1$, any approximation within $\ell_p$-distance strictly less than $\pi / 2$ also guarantees an approximation within $\ell_{\infty}$-distance strictly less than $\pi / 2$ for all $p \geq 1$.
This extends the inapproximability result to all $\ell_p$-norms for $p \geq 1$. Since $1\le k \le n$, the entries of $M$ have magnitude $O(n)$ and denominators at most two. Ergo, the Fourier coefficients have polynomially bounded numerators and constant denominators, establishing strong $\mathsf{NP}$-hardness.
\end{proof}

\begin{figure}
    \centering
\begin{tikzpicture}[
    x=1cm,
    y=1cm,
    region/.style={
        draw=mapoutline,
        fill=mapfill,
        line width=0.8pt
    },
    point/.style={
        circle,
        fill=mappoint,
        draw=white,
        line width=0.7pt,
        inner sep=0pt,
        minimum size=8.5pt
    },
    mapping/.style={
        draw=mapoutline,
        line width=0.8pt,
        -{Stealth[length=3.2mm,width=2.6mm]},
        shorten <=2pt,
        shorten >=3pt
    }
]

\draw[region] (0,0)
    ellipse[x radius=0.9cm, y radius=2.1cm];

\draw[region] (4,0)
    ellipse[x radius=0.9cm, y radius=2.1cm];

\node[anchor=south, inner sep=0pt]
    at (0,2.30) {$\Theta$};

\node[anchor=south, inner sep=0pt]
    (Xlabel) at (4,2.30) {$\mathcal{X}$};

\node[anchor=base west, inner sep=0pt]
    at (Xlabel.base east) {${}=T(\Theta)$};

\node[point] (theta1) at (-0.22, 0.88) {};
\node[point] (theta2) at ( 0.15,-0.28) {};
\node[point] (xstar)  at ( 4.00, 0.44) {};

\draw[mapping]
    (theta1)
    .. controls (1.40,1.70) and (2.80,1.40) ..
    (xstar);

\draw[mapping]
    (theta2)
    .. controls (1.50,0.28) and (2.80,0.50) ..
    (xstar);

\node[anchor=south west, inner sep=0pt]
    at (-0.08,1.24) {$\theta_1^{\star}$};

\node[anchor=north west, inner sep=0pt]
    at (0.30,-0.33) {$\theta_2^{\star}$};

\node[anchor=south west, inner sep=0pt]
    at (4.10,0.65) {$x^{\star}$};

\end{tikzpicture}
    \caption{Illustration of the mapping between local minimizers under $T$ as stated in \cref{lemma:local-homeomorphism}, where $x^\star=T(\theta^\star_1)=T(\theta^\star_2)$.}
    \label{fig:local-homeomorphism}
\end{figure}

\section{Reduction to VQA training}

We now prove \cref{thm:vqa-main} by giving a polynomial-time reduction from the Hermitian trigonometric polynomial local-minimization problem in \cref{thm:htp-nphard} to VQA training. We construct a polynomial-size parameterized quantum circuit (PQC) and an observable whose training objective $F(\theta)$ is exactly the hard Hermitian trigonometric polynomial $f(\theta)$.

\begin{proof}[Proof of \cref{thm:vqa-main}]

Fix $p \geq 1$, and let the Hermitian trigonometric polynomial $f(\theta)$ in Eq.~\eqref{eq:HTPChangeOfVariables} be a hard instance of the Hermitian trigonometric polynomial local-minimization problem.
In order to reproduce the products $T(\theta_i)T(\theta_j)$ appearing in $f$, we start by initializing $2n$ qubits in the state $\lvert 0^{2n}\rangle$.
Each parameter $\theta_i$ is applied to the $i$-th and $(n+i)$-th qubits.
In particular, we define the PQC
\begin{align}
U(\theta)
= \prod_{i = 1}^{n} R_{x,i}(\theta_i)R_{x,n+i}(\theta_i),
\end{align}
where $R_{x,j}(\phi) = \mathrm{e}^{-\mathrm{i}\phi X_j/2}$, for $\phi \in \mathbb{R}$, denotes an $x$-rotation acting on qubit $j$.

We next define the Pauli-sum observable
\begin{align}
H
= \frac{1}{4}\sum_{i = 1}^{n}\sum_{j = 1}^{n}M_{ij}(I-Z_i)(I-Z_{n+j}).
\end{align}
Then, since $U(\theta)\lvert 0^{2n}\rangle$ is a product state and $I-Z_i$ and $I-Z_{n+j}$ act on distinct qubits, their expectation value factorizes.
In particular, we have
\begin{align}
F(\theta)
&= \langle 0^{2n}\rvert U(\theta)^{\dagger}HU(\theta)\lvert 0^{2n}\rangle \nonumber\\
&= \frac{1}{4}\sum_{i = 1}^{n}\sum_{j = 1}^{n}M_{ij}S_i(\theta_i)S_{n+j}(\theta_j),
\end{align}
where $S_i(\phi)$ denotes the expectation value of $I-Z_i$ in the prepared state $R_{x,i}(\phi)\lvert 0\rangle$, i.e.,
\begin{align}
S_i(\phi)
= \langle 0\rvert R_{x,i}(\phi)^{\dagger}(I-Z_i)R_{x,i}(\phi)\lvert 0\rangle.
\end{align}

To conclude the proof, it remains to show that $S_i(\phi)/2 = T(\phi)$ for every qubit $i$ and every angle $\phi$.
Indeed, since $(I-Z)/2 = \lvert 1\rangle\!\langle 1\rvert$, we have
\begin{align}
\tfrac{1}{2}S_i(\phi)
&= \left\lvert\langle 1\rvert R_{x,i}(\phi)\lvert 0\rangle\right\rvert^2 = \sin^2(\phi/2) = \tfrac 12 (1-\cos(\phi)) \nonumber\\
&= T(\phi),
\qquad \forall \phi \in [0,2\pi],
\end{align}
and therefore
\begin{align}
F(\theta)
= \sum_{i = 1}^{n}\sum_{j = 1}^{n}M_{ij}T(\theta_i)T(\theta_j)
= f(\theta).
\end{align}
This explicit expression allows efficient classical evaluation of the objective to any desired precision.

Finally, since the circuit uses only $2n$ qubits and $2n$ gates, while $H$ has $\mathcal{O}(n^2)$ Pauli terms, the construction can be carried out in polynomial time. Since $F(\theta) = f(\theta)$, any algorithm that approximates a local minimizer of the VQA training objective $F(\theta)$ within $\ell_p$-distance strictly less than $\pi/2$ also approximates a local minimizer of the Hermitian trigonometric polynomial $f(\theta)$ within the same distance. Our result therefore follows from \cref{thm:htp-nphard}.
\end{proof}

\section{Concluding remarks}

We have shown that approximating the location of a local minimizer of a VQA training objective is strongly $\mathsf{NP}$-hard, even for objectives that admit efficient classical evaluation. Thus, replacing global optimization with local optimization does not, in general, circumvent the worst-case computational hardness of VQA training. Notably, this hardness already arises for circuits consisting solely of single-qubit rotations with shared parameters and a commuting diagonal two-local observable.

Despite this worst-case hardness, parameter optimization may be tractable for particular VQA families \cite{you2022convergence,v2026near-optimal,girardi2025trained}. For example, exploiting the structure of level-1 quantum approximate optimization algorithm (QAOA) objectives enables efficient parameter tuning for Ising models under suitable assumptions on the weights \cite{v2026near-optimal}. This raises the question of which restrictions on observables and circuit architectures distinguish such tractable families from those for which even approximating a local minimizer is hard. We leave a systematic characterization of this boundary to future work.

\begin{acknowledgments}
D.E.K.\ is supported by the National Research Foundation, Singapore through the National Quantum Office, hosted in A*STAR, under the Advanced Quantum Algorithms and Solutions (AQAS) Funding Initiative (S25Q9DA001). A.V.\ is supported by the MOE Tier 2 Grant (MOE-T2EP20223-0018), the CQT++ Core Research Funding Grant (SUTD) (RS-NRCQT-00002), and partially by Project MIS 5154714 of the National Recovery and Resilience Plan, Greece 2.0, funded by the
European Union under the NextGenerationEU Program.
\end{acknowledgments}

\section*{Data Availability}
No data were created or analyzed in this study.

\bibliography{localvqa}

\end{document}